\newtheorem{thm}{Theorem}
\newtheorem{pro}{Proposition}
\newtheorem{cor}{Corollary}
\newtheorem{lem}{Lemma}
\newenvironment {proof} {\noindent{\em Proof.}}{\hspace*{\fill}$\Box$\par\vspace{4mm}}
\def\qed{\hfill \nopagebreak\rule{5pt}{8pt}}
\title{\bf Rainbow connections for planar graphs and line graphs\footnote{Supported by NSFC and ``the
Fundamental Research Funds for the Central Universities". } }
\author{
\small  Xiaolong Huang, Xueliang Li, Yongtang Shi\\
\small Center for Combinatorics and LPMC-TJKLC \\
\small Nankai University, Tianjin 300071, China \\
\small Email: xiaolonghuang@ymail.com, lxl@nankai.edu.cn,
shi@nankai.edu.cn
\date{}}
\begin{document}
\maketitle
\begin{abstract}
An edge-colored graph $G$ is rainbow connected if any two vertices
are connected by a path whose edges have distinct colors. The
rainbow connection number of a connected graph $G$, denoted by
$rc(G)$, is the smallest number of colors that are needed in order
to make $G$ rainbow connected. It was proved that computing $rc(G)$
is an NP-Hard problem, as well as that even deciding whether a graph
has $rc(G)=2$ is NP-Complete. It is known that deciding whether a
given edge-colored graph is rainbow connected is NP-Complete. We
will prove that it is still NP-Complete even when the edge-colored
graph is a planar bipartite graph. We also give upper bounds of the
rainbow connection number of outerplanar graphs with small
diameters. A vertex-colored graph is rainbow vertex-connected if any
two vertices are connected by a path whose internal vertices have
distinct colors. The rainbow vertex-connection number of a connected
graph $G$, denoted by $rvc(G)$, is the smallest number of colors
that are needed in order to make $G$ rainbow vertex-connected. It is
known that deciding whether a given vertex-colored graph is rainbow
vertex-connected is NP-Complete. We will prove that it is still
NP-Complete even when the vertex-colored
graph is a line graph. \\
[2mm] Keywords:  computational complexity; rainbow connection; coloring; planar graph; line graph\\
[2mm] AMS Subject Classification (2010): 68Q25, 68R10, 05C10, 05C12,
05C15, 05C76.
\end{abstract}

\section{Introduction}
All graphs considered here are simple, finite and undirected. We
follow the notation and terminology of \cite{West}. An edge-colored
graph is {\it rainbow connected} if any two vertices are connected
by a path whose edges have distinct colors (such paths are called
{\it rainbow path}). Obviously, if $G$ is rainbow connected, then it
is also connected. This concept of rainbow connection in graphs was
introduced by Chartrand et al. in \cite{CJMZ}. The rainbow
connection number of a connected graph $G$, denoted by $rc(G)$, is
the smallest number of colors that are needed in order to make $G$
rainbow connected. Observe that $diam(G)\leq rc(G)\leq n-1$. It is
easy to verify that $rc(G)=1$ if and only if $G$ is a complete
graph, that $rc(G)=n-1$ if and only if $G$ is a tree. Chartrand et
al. computed the precise rainbow connection number of several graph
classes including complete multipartite graphs (\cite{CJMZ}). The
rainbow connection number has been studied for further graph classes
in \cite{CLRTY, KS,LLiu, LS} and for graphs with fixed minimum
degree in \cite{CLRTY,KY,S}. There are also some results on the
aspect of extremal graph theory, such as \cite{S1}. Very recently,
many results on the rainbow connection have been reported in a
survey of Li and Sun \cite{LiSun}.

Besides its theoretical interest as a natural combinatorial concept,
rainbow connection has an interesting application for the secure
transfer of classified information between agencies (\cite{E}).
While the information needs to be protected, there must also be
procedures that permit access between appropriate parties. This
twofold issue can be addressed by assigning information transfer
paths between agencies which may have other agencies as
intermediaries, while requiring a large enough number of passwords
and firewalls that is prohibitive to intruders, yet small enough to
manage (that is, enough that one or more paths between every pair of
agencies have no password repeated). An immediate question arises:
what is the minimum number of passwords or firewalls needed that
allows one or more secure paths between every two agencies such that
the passwords along each path are distinct?

The complexity of determining the rainbow connection of a graph has
been studied in literature. It is proved that the computation of
$rc(G)$ is NP-hard \cite{CFMY, LZ}. In fact it is already
NP-complete to decide whether $rc(G)=2$, and in fact it is already
NP-complete to decide whether a given edge-colored (with an
unbounded number of colors) graph is rainbow connected \cite{CFMY}.
More generally it has been shown in \cite{LZ} that for any fixed
$k\geq 2$, deciding whether $rc(G)=k$ is NP-complete. Moreover, the
authors in \cite{LL} proved that it is still NP-Complete even when
the edge-colored graph is bipartite. Ananth and Nasre \cite{AN}
showed that for any fixed integer $k\geq 3$, deciding whether
$rc(G)=k$ is NP-Complete.

In this paper, we will prove that it is still NP-Complete to decide
whether a given edge-colored graph is rainbow connected even when
the edge-colored graph is a planar bipartite graph. As deciding
whether $rc(G)=2$ is NP-complete, the authors in \cite{DL, LLL}
considered bridgeless graphs with diameter two and proved that the
rainbow connection number in this case can not exceed $5$. We will
show that the rainbow connection number is at most three for
bridgeless outerplanar graphs with diameter two, and at most six for
bridgeless outerplanar graphs with diameter three.

A vertex-colored graph is {\it rainbow vertex-connected} if any two
vertices are connected by a path whose internal vertices have
distinct colors (such paths are called {\it vertex rainbow path}).
The {\it rainbow vertex-connection} of a connected graph $G$,
denoted by $rvc(G)$, is the smallest number of colors that are
needed in order to make $G$ rainbow vertex-connected. An easy
observation is that if $G$ is of order $n$ then $rvc(G)\leq n-2$ and
$rvc(G)=0$ if and only if $G$ is a complete graph. Notice that
$rvc(G)\geq diam(G)-1$ with equality if the diameter is $1$ or $2$.
For rainbow connection and rainbow vertex-connection, some examples
are given to show that there is no upper bound for one of parameters
in terms of the other in \cite{KY}. The rainbow vertex-connection
number has been studied for graphs with fixed minimum degree in
\cite{KY,LS1}. In \cite{CLS}, Chen, Li and Shi studied the
complexity of determining the rainbow vertex-connection of a graph
and prove that computing $rvc(G)$ is NP-Hard. Moreover, they proved
that it is already NP-Complete to decide whether $rvc(G) = 2$. They
also proved that it is already NP-complete to decide whether a given
vertex-colored graph is rainbow vertex-connected. In this paper, we
will prove that it is still NP-Complete to decide whether a given
vertex-colored graph is rainbow vertex-connected even when the
vertex-colored graph is a line graph.

\section{Rainbow connection for planar graphs}

Before proceeding, we list some related results as useful lemmas.

\begin{lem}[\cite{CFMY}]\label{lem1}
The following problem is NP-Complete: Given an edge-colored graph
$G$, check whether the given coloring makes $G$ rainbow connected.
\end{lem}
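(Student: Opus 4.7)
\medskip

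\noindent\textbf{Proof proposal for Lemma~\ref{lem1}.}

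First I would establish membership in NP. A succinct certificate is a list of paths, one for each unordered pair $\{u,v\}$ of vertices of $G$; there are at most $\binom{n}{2}$ such pairs and each listed path has at most $n-1$ edges, so the certificate has polynomial size. Verification amounts to checking, for each listed walk, that it is indeed a $u$--$v$ path in $G$ and that its edge colors are pairwise distinct, which is clearly polynomial in $n$ and in the number of colors used.

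For NP-hardness the plan is a reduction from 3-SAT. Given a formula $\phi$ on variables $x_1,\dots,x_n$ with clauses $C_1,\dots,C_m$, I would construct in polynomial time an edge-colored graph $G_\phi$ together with two distinguished vertices $s,t$ such that a rainbow $s$--$t$ path exists if and only if $\phi$ is satisfiable, while every other pair of vertices is rainbow-connected unconditionally. The variable gadget I envisage, for each $x_i$, offers two parallel subpaths (a ``true'' branch and a ``false'' branch) sharing a distinctive pair of color classes, so that any rainbow $s$--$t$ path is forced to pick exactly one branch and hence to encode a truth assignment to $x_i$. Each clause $C_j=(\ell_{j,1}\vee\ell_{j,2}\vee\ell_{j,3})$ would be represented by a small bottleneck subgraph whose coloring duplicates the colors of the three literal branches $\ell_{j,1},\ell_{j,2},\ell_{j,3}$; forcing the $s$--$t$ path to be rainbow then prevents the assignment encoded on the variable gadgets from falsifying all three literals of $C_j$ simultaneously.

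To drive the rainbow-connectedness of all other pairs, I would add an auxiliary ``shortcut'' structure, for instance a universal vertex attached to every other vertex by an edge using a \emph{fresh} color, so that any two non-$(s,t)$ vertices are joined by a length-2 rainbow path on two brand-new colors. The chief obstacle, and the place where the proof would need the most care, is guaranteeing that this shortcut mechanism does \emph{not} inadvertently create a rainbow $s$--$t$ path independent of $\phi$: one typically attaches $s$ and $t$ only through gadget edges (or pads the variable/clause gadgets with forced color-repetitions) so that every $s$--$t$ walk that uses a shortcut edge is blocked by a repeated color. Calibrating the color palette into three disjoint groups (variable-colors, clause-duplication colors, and shortcut colors), together with the length parameters of the gadgets, is what makes the construction work; once the gadgets are locked in, both directions of the equivalence $\phi\in\mathrm{SAT}\iff G_\phi\text{ is rainbow connected}$ follow by tracing the permissible rainbow $s$--$t$ paths through the variable and clause gadgets.
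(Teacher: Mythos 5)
This lemma is quoted from \cite{CFMY}; the paper supplies no proof of it, so there is nothing in-paper to compare against line by line. Your outline does follow the same general strategy as the original argument in \cite{CFMY} (membership in NP by exhibiting one rainbow path per pair, hardness by a reduction from 3-SAT in which almost all pairs are rainbow connected unconditionally and one designated pair is rainbow connected exactly when the formula is satisfiable). The NP-membership part of your write-up is complete and correct.

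The hardness part, however, is a plan rather than a proof, and the gap is not cosmetic. None of the gadgets is actually constructed: no vertex sets, no edges, no explicit color assignment, so the central claims (``any rainbow $s$--$t$ path is forced to pick exactly one branch,'' ``the clause bottleneck prevents falsifying all three literals'') cannot be checked. Worse, the one concrete device you do propose is self-defeating as stated: a universal vertex $w$ joined to every other vertex by edges in fresh, pairwise distinct colors gives $s\,w\,t$ as a length-two rainbow $s$--$t$ path, so $G_\phi$ would be rainbow connected regardless of $\phi$. You notice this and say one should attach $s$ and $t$ only through gadget edges, but then every pair of the form $\{s,u\}$, $\{t,u\}$, and $\{s,t\}$ itself for unsatisfiable $\phi$ must be re-analyzed through the gadgets --- and that analysis is exactly the content of the reduction, which you explicitly defer (``the place where the proof would need the most care''). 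Until the gadgets and their coloring are written down and both directions of the equivalence are verified against them, this is a statement of intent, not a proof of the lemma.
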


By subdividing each edge of a given edge-colored graph $G$ exactly
once, one can get a bipartite graph $G'$. Then color the edges of
$G'$ as follows: Let $e'$ and $e''$ be the two edges of $G'$
produced by subdividing at the edge $e$ of $G$. Then color the edge
$e'$ with the same color of $e$ and color the edge $e''$ with a new
color, such that all the new colors of the edges $e''$ are distinct.
In this way, Li and Li proved the following result from the problem
in Lemma \ref{lem1}.

\begin{lem}[\cite{LL}]\label{lem2}
Given an edge-colored bipartite graph $G$, checking whether the
given coloring makes $G$ rainbow connected is NP-Complete.
\end{lem}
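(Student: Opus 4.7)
The plan is to turn the construction sketched in the paragraph immediately before the lemma into a polynomial-time many-one reduction from the NP-complete problem of Lemma~\ref{lem1}. Given an instance (an edge-colored graph $G$), I would build $G'$ by subdividing every edge $e=\{x,y\}$ of $G$ exactly once, replacing $e$ by $e'=\{x,w_e\}$ and $e''=\{w_e,y\}$ where $w_e$ is a new vertex; give $e'$ the original color of $e$ and give each $e''$ a brand-new color that appears nowhere else in $G'$. The construction is clearly polynomial, and $G'$ is bipartite with parts $V(G)$ and $\{w_e:e\in E(G)\}$, since each original vertex is only adjacent to subdivision vertices and vice versa.

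The heart of the proof is the equivalence: $G$ is rainbow connected under the original coloring if and only if $G'$ is rainbow connected under the new coloring. For the forward direction and $u,v\in V(G)$, take a rainbow path $P$ in $G$; its subdivided copy $P'$ in $G'$ alternates between $e'$-edges carrying the distinct original colors used by $P$ and $e''$-edges carrying pairwise distinct fresh colors disjoint from the original palette, so $P'$ is rainbow. To handle a pair $u\in V(G)$ and $w_e$ with $e=\{x,y\}$, I would run a rainbow path in $G$ from $u$ to $y$, convert it to $G'$, and append $e''$; its fresh color is absent from the converted path, so the extended path is still rainbow. A subcase arises when the chosen $G$-path uses $e$ itself (so $w_e$ already lies on the converted path); then I would simply truncate at $w_e$. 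Two subdivision-vertex endpoints are treated analogously.

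For the backward direction, given $u,v\in V(G)$ and any rainbow $u$-$v$ path $Q$ in $G'$, the key observation is that every subdivision vertex $w_e$ has degree exactly two in $G'$, so entering $w_e$ forces leaving by the other incident edge. Consequently, contracting each traversed subdivision vertex yields a simple walk in $G$ whose edges inherit precisely the colors of the $e'$-edges traversed by $Q$; since $Q$ is rainbow these colors are distinct, so the walk is a rainbow $u$-$v$ path in $G$. Membership in the same complexity class as the source problem is inherited from Lemma~\ref{lem1}, and NP-completeness then transfers to the bipartite setting.

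The step I expect to require the most care is the forward direction when one endpoint is a subdivision vertex and the chosen rainbow path in $G$ happens to run along the very edge that was subdivided to create that endpoint; the truncation trick described above is what makes the bookkeeping go through, and once this subtlety is handled the remainder is essentially a matter of tracking how colors are split between the $e'$-edges and the $e''$-edges.
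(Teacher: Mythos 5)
Your proposal is correct and follows essentially the same route as the paper, which only sketches this construction (subdivide each edge once, keep the original color on $e'$ and assign a fresh distinct color to each $e''$) and attributes the details to \cite{LL}. Your fleshing-out of the equivalence, including the degree-two contraction argument for the backward direction and the truncation trick for subdivision-vertex endpoints, is sound and consistent with the intended argument.
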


A {\it plane graph} is a planar graph together with an embedding of
the graph in the plane. From the Jordan Closed Curve Theorem, we
know that a cycle $C$ in a plane graph separates the plane into two
regions, the interior of $C$ and the exterior of $C$. We prove the
following result.

\begin{thm}\label{thm1}
Given an edge-colored planar graph $G$, checking whether the given
coloring makes $G$ rainbow connected is NP-Complete.
\end{thm}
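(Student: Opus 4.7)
The plan is to reduce from Lemma \ref{lem2}: starting from an edge-colored bipartite graph $G$ with coloring $c$, I would build in polynomial time a planar (in fact planar bipartite) edge-colored graph $G^{\ast}$ with coloring $c^{\ast}$ such that $G$ is rainbow connected under $c$ if and only if $G^{\ast}$ is rainbow connected under $c^{\ast}$. Since Lemma \ref{lem2} already gives NP-completeness on the bipartite input, producing such a reduction immediately yields Theorem \ref{thm1}.

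First I would fix an arbitrary straight-line drawing of $G$ in the plane; by a standard general-position perturbation I may assume every pair of edges meets in at most one crossing and no three edges share a crossing point. Then, at every crossing point of two edges $e_1=ab$ and $e_2=cd$, I would excise a small disk containing that crossing and splice into the disk a planar ``crossover gadget'' $H_{e_1,e_2}$ whose only interface with the rest of the graph is through the four stubs at $a,b,c,d$. The coloring $c^{\ast}$ agrees with $c$ on the surviving original edges; every new edge in a gadget receives a fresh globally unique color, with the single exception that one designated edge on the internal $a$-$b$ path of $H_{e_1,e_2}$ inherits the color $c(e_1)$ and one designated edge on the internal $c$-$d$ path inherits $c(e_2)$. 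In this way the gadget contributes a rainbow $a$-$b$ path and a rainbow $c$-$d$ path that ``simulate'' the deleted edges $e_1,e_2$.

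Correctness will then be checked in two directions. For $(\Rightarrow)$, given any rainbow path in $G$ under $c$, I would obtain a rainbow path in $G^{\ast}$ by routing it through the interior of every gadget whose deleted edge it used; since the extra edges carry globally unique fresh colors and each gadget is entered at most once on a simple path, no color collision arises. For $(\Leftarrow)$, any rainbow path in $G^{\ast}$ can be projected back to a rainbow walk in $G$ by contracting each full traversal of a gadget to the corresponding original edge, provided the gadget is designed so that the only rainbow routes through $H_{e_1,e_2}$ connecting two of its four stubs are the intended $a$-$b$ and $c$-$d$ routes; a rainbow walk in $G$ then contains a rainbow path between its endpoints.

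The main obstacle, and the step I expect to occupy most of the technical work, is precisely this gadget design: I must prevent spurious ``cross'' traversals from $a$ to $c$, $a$ to $d$, $b$ to $c$, or $b$ to $d$ from being rainbow, since otherwise $G^{\ast}$ could become rainbow connected even when $G$ was not. My preferred approach is to reuse one or two of the gadget's fresh colors on a carefully chosen pair of edges that lies on every cross traversal but on neither of the intended $a$-$b$ and $c$-$d$ routes; this forces every cross traversal to repeat a color. A second technical point is to preserve bipartiteness, which I would handle by inserting an even number of subdivision vertices inside each gadget so that the 2-coloring of $G$ extends consistently; this also yields the planar \emph{bipartite} version advertised in the introduction.
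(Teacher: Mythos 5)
Your strategy is exactly the paper's: draw $G$ in general position, replace each crossing of $e_1=ab$ and $e_2=cd$ by a planar crossover gadget whose edges get fresh colors except for designated edges inheriting $c(e_1)$ and $c(e_2)$, and argue both directions by routing through, respectively contracting, gadget traversals. (The paper reduces from Lemma~\ref{lem1} and obtains the bipartite case afterwards by subdivision, whereas you start from Lemma~\ref{lem2}; that difference is immaterial.) However, as written your proposal has a genuine gap at precisely the point you yourself flag as ``the main obstacle'': no crossover gadget is actually exhibited, and no verification is given that the only rainbow through-routes are the intended $a$--$b$ and $c$--$d$ ones. This is the entire technical content of the theorem. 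The paper supplies an explicit $3\times 3$-grid on nine vertices with a concrete coloring using five fresh colors per crossing, several of them deliberately repeated inside the gadget (e.g.\ $c'(d_ig_i)=c'(p_iq_i)$ and $c'(r_ip_i)=c'(\ell_it_i)=c'(r_is_i)$), together with a Claim checking that every $x_i$-to-$u_i$ or $x_i$-to-$v_i$ traversal repeats a color. Saying that one ``would reuse one or two fresh colors on a pair of edges lying on every cross traversal but on neither intended route'' is the right idea, but until such a gadget is displayed and checked, the reduction does not exist.

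A second, concrete defect: your coloring rule as stated breaks when an edge participates in more than one crossing. If $e_1$ crosses two different edges, then each of the two gadgets along $e_1$ contains a designated edge colored $c(e_1)$, so the simulated $a$--$b$ route for $e_1$ passes through both gadgets and repeats $c(e_1)$; hence a rainbow path of $G$ using $e_1$ need not lift to a rainbow path of $G^{\ast}$, and the forward direction fails. You cannot simply drop the repeated occurrences either, because the backward projection argument needs every traversal that simulates $e_1$ to ``spend'' the color $c(e_1)$. The paper deals with this by first assuming at most one crossing per edge and then, in the general case, inserting a degree-two vertex between consecutive crossings on the same edge with an appropriate recoloring; your write-up needs an analogous step. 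Finally, note that ``each gadget is entered at most once on a simple path'' is not automatic: a simple path in $G$ may use both $e_1$ and $e_2$, which cross each other, so the lifted path visits the same gadget twice; your gadget's two intended routes must therefore also be color-disjoint from each other, another constraint the unspecified construction must satisfy.
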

\begin{proof}
By Lemma \ref{lem1}, it will suffice by showing a polynomial
reduction from the problem in Lemma \ref{lem1}.

Given a graph $G=(V,E)$ and an edge-coloring $c$ of $G$, we will
construct an edge-colored planar graph $G'$ such that $G$ is rainbow
connected if and only if $G'$ is rainbow connected.

\begin{figure}[ht]
\begin{center}
\includegraphics[width=13cm]{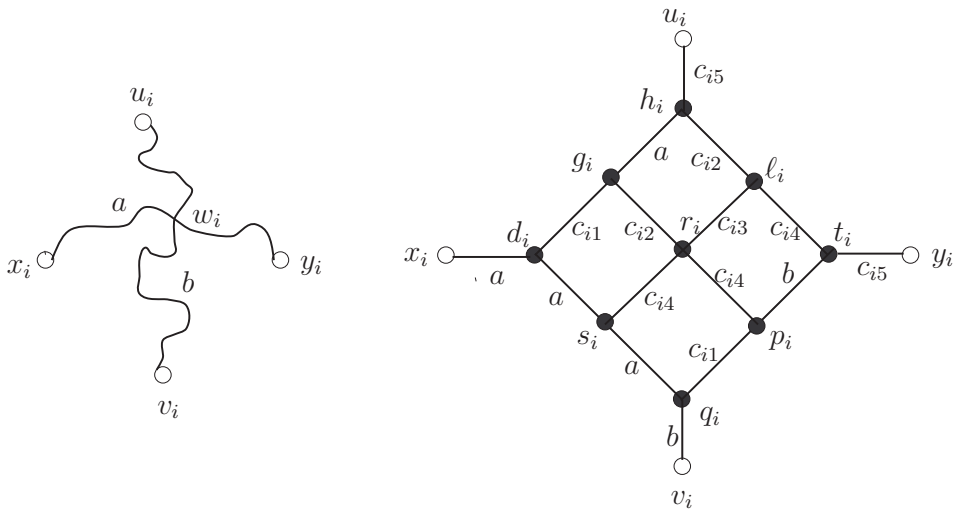}
\end{center}
\caption{The graph constructed in Theorem \ref{thm1} for some
crossing $w_i$.} \label{fig1}
\end{figure}

For one drawing of a given graph, by moving edges slightly, we can
ensure that no three edges have a common crossing and no two edges
cross more than once. Given a such drawing of $G$ in the plane with
$k$ crossings, denoted by $w_i$, where $i=1,2,\ldots,k$. Let $w_i$
be formed by two edges $x_iy_i$ and $u_iv_i$. First, we assume that
there is at most one crossing on each edge.

We construct an edge-colored graph $G'$ as follows. Graph
$G'=(V',E')$ is obtained from $G$ by replacing each crossing $w_i$
with one $3\times 3$-grid with vertex set $\{d_i,\ g_i,\ h_i, \
\ell_i,\ r_i,\ s_i,\ t_i,\ p_i,\ q_i\}$, as shown in Figure
\ref{fig1}. Therefore, we have $V'=V\cup \{d_i,\ g_i,\ h_i, \
\ell_i,\ r_i,\ s_i,\ t_i,\ p_i, q_i: 1\leq i\leq k \},$ $E'=E\cup
\{x_id_i,\ y_it_i,\ u_ih_i,\ v_iq_i,\ d_ig_i,\ g_ih_i,\ h_i\ell_i,\
g_ir_i,\ d_is_i,\ \ell_ir_i,\ r_is_i,\ \ell_it_i,\ r_ip_i,\ s_iq_i,\
p_iq_i,\ \\ p_it_i : 1\leq i\leq k\}$. From our construction, we
know that $G'$ is planar. In the following, we define an
edge-coloring $c'$ of $G'$: $c'(e)=c(e)$ for each $e\in E$;
$c'(x_id_i)=c'(d_is_i)=c'(s_iq_i)=c'(g_ih_i)=c(x_iy_i)$,
$c'(v_iq_i)=c'(p_it_i)=c(u_iv_i)$, $c'(d_ig_i)=c'(p_iq_i)=c_{i1}$,
$c'(g_ir_i)=c'(h_i\ell_i)=c_{i2}$, $c'(r_i\ell_i)=c_{i3}$,
$c'(r_ip_i)=c'(\ell_it_i)=c'(r_is_i)=c_{i4}$,
$c'(u_ih_i)=c'(t_iy_i)=c_{i5}$, where $c_{ij}$ are the new colors
for $1\leq i\leq k$ and $1\leq j\leq 5$.

Suppose coloring $c'$ makes $G'$ rainbow connected. For any two
vertices $u,v\in V$, there is a rainbow path $P'$ connected $u$ and
$v$. If $P'$ does not pass any grid, then $P'$ is also a rainbow
path joining $u$ and $v$ in $G$ under the coloring $c$. Otherwise,
suppose $P'$ passes some grid. We give the following claim.

{\bf Claim.} If the rainbow path $P'$ enters to a grid from vertex
$x_i$ (or $y_i$), then it must be go out from $y_i$ (or $x_i$).

Notice that $x_id_ig_ir_i\ell_it_iy_i$ is a rainbow path enters to
the grid from $x_i$ to $y_i$.  From the definition of $c'$, one can
easily to check that there has no rainbow path from $x_i$ (or $y_i$)
to $u_i$ and $v_i$, which just go through this grid. \qed

Similarly, one also can prove that if the rainbow path $P'$ enters
to a grid from vertex $u_i$ (or $v_i$), then it must be go out from
$v_i$ (or $u_i$). Denote by $P'(x_i,y_i)$ ($P'(u_i,v_i)$) the
subpath joining vertices $x_i$ and $y_i$ ($u_i$ and $v_i$) in path
$P'$ and let $P''$ be the path obtained from $P'$ by deleting
$P'(x_i,y_i)$ ($P'(u_i,v_i)$) and adding edge $x_iy_i$ ($u_iv_i$).
Applying this operation for each grid appeared in path $P'$ yields
one path $P$ of $G$, which is also a rainbow path in $G$ under the
coloring $c$. It follows that the coloring $c$ makes $G$ rainbow
connected.

To prove the other direction, suppose the coloring $c$ makes $G$
rainbow connected. Let $u$ and $v$ be a pair of vertices in $G'$. We
will find a rainbow path joining $u$ and $v$ in $G'$ under the
coloring $c'$ and then obtain that $c'$ makes $G$ rainbow
connection.

{\bf Case 1.} $u,v\in V$.

If there is a rainbow path joining $u$ and $v$ without going through
any crossing, then this path is also a rainbow path joining $u$ and
$v$ in $G'$ under the coloring $c'$. Now let $P$ be the rainbow path
joining $u$ and $v$ and some crossing $w_i$ lies on $P$.  Without
loss of generality, suppose $P=u\ldots x_iy_i\ldots v$. Then the new
path $P'$ obtained from $P$ by replacing the edge $x_iy_i$ with path
$x_id_ig_ir_i\ell_it_iy_i$ is the required rainbow path joining $u$
and $v$ in $G'$.

{\bf Case 2.} $u,v\in \{d_i,\ g_i,\ h_i, \ \ell_i,\ r_i,\ s_i,\
t_i,\ p_i,\ q_i: 1\leq i\leq k \}$, i.e., $u$ and $v$ belongs to the
same grid.

In this case, one can easily to find a rainbow path connecting $u$
and $v$ from the definition of $c'$.

{\bf Case 3.} $u\in V$, $v\in \{d_i,\ g_i,\ h_i, \ \ell_i,\ r_i,\
s_i,\ t_i,\ p_i, q_i: 1\leq i\leq k \}$.

It is easy to find the required rainbow path for the case of $u=u_i$
or $u=y_i$. Now suppose $u\notin \{u_i,v_i\}$. Since there exists a
rainbow path $P'$ joining $u$ and $u_i$ (or $y_i$) in $G'$ by {\bf
Case 1}, attaching the rainbow path between $u_i$ (or $y_i$) and $v$
to $P'$ yields the required rainbow path connecting $u$ and $v$.

{\bf Case 4.} $u$ and $v$ belongs to the different grids.

From the above cases, the proof of this case is obviously.

In any case, there exists one rainbow path connecting $u$ and $v$ in
$G'$ under the coloring $c'$.

Notice that this reduction is indeed a polynomial reduction, since
each graph has at most ${n\choose 2}$ crossings and for each
crossing, we introduce nine vertices, fourteen edges and five new
colors in the construction of graph $G'$.

Suppose there are more than one crossings on some edge $e$, we can
add one vertex with degree two between any two distinct crossings on
the same edge and then assign color $c(e)$ and a new color $c_1$ to
the two new edges. Since each graph has at most ${n\choose 2}$
crossings, we may introduce at most ${n\choose 2}$ new vertices and
${n\choose 2}$ new colors. Similarly, we can complete the polynomial
reduction.
\end{proof}

Using the same subdividing method for reducing Lemma \ref{lem1} to
Lemma \ref{lem2}, we can get the following corollary.

\begin{cor}
Given an edge-colored planar bipartite graph $G$, checking whether
the given coloring makes $G$ rainbow connected is NP-Complete.
\end{cor}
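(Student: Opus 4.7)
The plan is to reduce from the problem in Theorem \ref{thm1} by applying the same subdivision construction that Li and Li use to pass from Lemma \ref{lem1} to Lemma \ref{lem2}. Given an edge-colored planar graph $G=(V,E)$ with coloring $c$, I would construct $G'$ by subdividing each edge $e=uv$ of $G$ exactly once, introducing a new vertex $w_e$ and replacing $e$ by two edges $e'=uw_e$ and $e''=w_ev$. Since subdivision preserves planarity (the subdivision vertex can be placed on the original edge in the planar embedding of $G$), and since the bipartition $V$ versus $\{w_e:e\in E\}$ has no edges within either part, the resulting $G'$ is a planar bipartite graph.

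Next I would define the edge-coloring $c'$ exactly as in the proof of Lemma \ref{lem2}: set $c'(e')=c(e)$ and $c'(e'')=\mu_e$, where the colors $\mu_e$ are pairwise distinct and distinct from every color used by $c$. The key property of this choice is that each $\mu_e$ occurs only once in all of $G'$, so it can never repeat along any path.

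The central step will be the equivalence: $G$ is rainbow connected under $c$ if and only if $G'$ is rainbow connected under $c'$. For the forward direction, I would lift any rainbow $u$--$v$ path in $G$ to a rainbow path in $G'$ by replacing each edge with its subdivided pair; the lifted path acquires old colors that are already distinct along the original path and fresh colors that are globally unique. For the converse, I would observe that a rainbow path between two vertices of $V$ in $G'$ alternates between $V$ and subdivision vertices, so contracting consecutive pairs of subdivided edges yields a walk in $G$ whose list of original colors is a sublist of the colors along the path in $G'$, hence distinct. Pairs of vertices where one or both endpoints are subdivision vertices $w_e$ would be handled by attaching the short edges at $w_e$, as in Lemma \ref{lem2}.

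The reduction is polynomial, since exactly $|E|$ vertices, $2|E|$ edges, and $|E|$ new colors are introduced. The only potential obstacle is the routine but necessary case analysis on whether the endpoints are original or subdivision vertices; this analysis is essentially the same as in the proof of Lemma \ref{lem2}, so no new ideas are needed beyond the observation that subdivision preserves planarity.
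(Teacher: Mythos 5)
Your proposal is correct and follows exactly the route the paper intends: reduce from the planar case of Theorem \ref{thm1} by subdividing every edge once, which preserves planarity and yields a bipartite graph, and reuse the coloring scheme (old color on one half-edge, a globally unique new color on the other) from the reduction of Lemma \ref{lem1} to Lemma \ref{lem2}. The paper gives no more detail than this one-line observation, so your write-up is, if anything, more explicit than the original.
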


We now consider a more restricted class of planar graphs, namely
outerplanar graphs. A planar graph $G$ is said to be {\it
outerplanar} if $G$ can be embedded in the plane in such a way that
all vertices are incident with a common face. From this it is easy
to see that any $2$-connected outerplanar graph has a Hamilton
cycle. Now we give a property of outerplanar graphs.

\begin{pro}[\cite{West}]\label{proposition1}
Every simple outerplanar graph has a vertex of degree at most two.
\end{pro}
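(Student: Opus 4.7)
The plan is to argue by induction on $n=|V(G)|$, with the base cases $n\le 2$ trivial, and to reduce the general case to the $2$-connected case via the block decomposition of $G$. If $G$ is disconnected, apply the inductive hypothesis to any component. If $G$ has a cut vertex, choose an endblock $B$ of $G$ (a block containing at most one cut vertex of $G$): if $B\cong K_{2}$, the non-cut endpoint already has degree $1$ in $G$; otherwise $B$ is $2$-connected and outerplanar, so by the $2$-connected case it contains at least \emph{two} vertices of degree two in $B$, and at least one of them is not the cut vertex of $G$ and therefore still has degree at most two in $G$.

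For the $2$-connected case, I would fix an outerplanar embedding of $G$, in which the boundary of the outer face is a Hamilton cycle $C$ and all remaining edges are pairwise non-crossing chords of $C$. I would then triangulate the interior by adding further non-crossing chords to obtain a maximal outerplanar supergraph $G^{*}\supseteq G$ whose inner faces are all triangles. The weak dual $T$ of $G^{*}$---with one vertex per inner face and adjacency given by sharing an edge---is a tree on $n-2$ nodes, a standard consequence of Euler's formula applied to a triangulated polygon. Every leaf of $T$ corresponds to an ``ear'' triangle of $G^{*}$ having two edges on $C$ and a single chord, and the apex of such an ear has degree exactly two in $G^{*}$, hence degree at most two in $G$ (since passing from $G$ to $G^{*}$ only added edges, which can only raise degrees). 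As any tree with at least two vertices has at least two leaves, we obtain the two degree-two vertices required.

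The only mildly subtle point is that the block-decomposition reduction demands \emph{two} degree-two vertices inside the $2$-connected endblock, not just one, so the $2$-connected case must be proved in the stronger form ``at least two ears''; this is exactly what the weak-dual tree delivers, and is where Euler's formula actually does the work. The small cases $n\in\{1,2,3\}$ are immediate by inspection (a triangle has all three vertices of degree two), and the remainder of the argument is routine bookkeeping about degrees under the addition of chords.
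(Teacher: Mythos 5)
Your proof is correct. Note, however, that the paper does not prove this proposition at all: it is quoted verbatim from West's textbook \cite{West} and used as a black box, so there is no in-paper argument to compare yours against. Your argument is a complete and standard one: the reduction through the block decomposition is sound (a non-cut vertex lies in a single block, so its degree in $G$ equals its degree in that block, and an endblock contains at most one cut vertex, which is why you correctly insist on the strengthened ``at least two'' version of the $2$-connected case), and the $2$-connected case via triangulating to a maximal outerplanar graph and counting leaves of the weak dual tree is exactly the classical ``every triangulated polygon has at least two ears'' argument. A marginally shorter route to the same strengthened statement, closer to what West actually does, is to pick an innermost chord $xy$ of the Hamilton cycle (one such that one of the two $x,y$-arcs contains no endpoint of any other chord); every internal vertex of that arc then has degree two, with no need to triangulate or to invoke the dual tree. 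Both versions rest on the same structural fact, so the difference is cosmetic.
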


In \cite{LLiu}, the authors proved that $rc(G)\leq \lceil n/2\rceil$
for any $2$-connected graph $G$. Since $rc(C_n)\leq \lceil
n/2\rceil$, where $C_n$ denotes the cycle graph of order $n$, we can
deduce that $rc(G)\leq \lceil n/2\rceil$ for any Hamiltonian graph
$G$.

\begin{pro}
Let $G$ be a Hamiltonian graph, then $rc(G)\leq \lceil n/2\rceil$.
\end{pro}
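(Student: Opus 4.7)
The plan is to leverage the fact, just recalled from \cite{LLiu}, that $rc(C_n) \leq \lceil n/2 \rceil$, where $C_n$ is the cycle of order $n$. Since $G$ is Hamiltonian, it contains a spanning subgraph isomorphic to $C_n$, say $C$. I would first fix an edge-coloring $c_0$ of $C$ using at most $\lceil n/2 \rceil$ colors that makes $C$ rainbow connected, which exists by the cited bound on $rc(C_n)$.

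Next, I would extend $c_0$ to an edge-coloring $c$ of $G$ by assigning to each edge of $E(G)\setminus E(C)$ an arbitrary color from the $\lceil n/2 \rceil$ already used on $C$. No new colors are introduced, so $c$ uses at most $\lceil n/2 \rceil$ colors in total. The key observation is that adding edges (and coloring them arbitrarily) can never destroy rainbow connectivity: any rainbow path that exists in $C$ under $c_0$ is still a rainbow path in $G$ under $c$, because its edge set and its colors are unchanged.

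Consequently, for every pair $u,v \in V(G)=V(C)$, the rainbow $u$--$v$ path guaranteed by the coloring of $C$ witnesses a rainbow $u$--$v$ path in $(G,c)$. Hence $(G,c)$ is rainbow connected, and $rc(G) \leq \lceil n/2 \rceil$ as claimed. There is no real obstacle here: the whole argument is a one-line reduction to the cycle case, and the only thing to verify is the monotonicity remark that extra edges colored with existing colors cannot harm rainbow connectivity.
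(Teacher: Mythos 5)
Your proof is correct and is essentially the same argument the paper intends: the sentence immediately preceding the proposition derives the bound from $rc(C_n)\leq\lceil n/2\rceil$ via exactly this spanning-subgraph monotonicity, and the paper offers no further proof. Your write-up just makes explicit the (correct) observation that extending a rainbow coloring of the Hamilton cycle to the remaining edges with already-used colors preserves rainbow connectivity.
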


As the rainbow connection number is at least the diameter and
deciding whether $rc(G)=2$ is NP-complete, it is necessary to
determine the rainbow connection number of graphs with diameter two.
The authors in \cite{DL, LLL} considered bridgeless graphs with
diameter two and proved that the rainbow connection number in this
case can not exceed $5$.

\begin{lem}[\cite{DL, LLL}]
If G is a connected bridgeless graph with diameter $2$, then
$rc(G)\leq 5$. Moreover, the upper bound is sharp.
\end{lem}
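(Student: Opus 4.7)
The plan is to construct inside $G$ a sparse ``skeleton'' subgraph that can be rainbow-edge-coloured with at most $5$ colours and that already contains a rainbow $x$--$y$ path for every pair $x,y\in V(G)$. The whole argument is structural, so the plan is to set up just enough combinatorial reserve, using bridgelessness, to push a case analysis through.

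First I would fix a vertex $v$ of large (say maximum) degree and partition $V(G)\setminus\{v\}$ into $A=N(v)$ and $B=V(G)\setminus N[v]$. Since $\mathrm{diam}(G)=2$, every $b\in B$ has at least one neighbour in $A$, for otherwise $d(v,b)\ge 3$; moreover, bridgelessness guarantees that every such $b$ lies on a cycle, so it admits two internally disjoint short routes back toward $v$. For each $b\in B$ I would fix a ``primary'' neighbour $\alpha(b)\in A$ and, whenever available, a ``secondary'' neighbour $\beta(b)\in A\cup B$ so that the two edges $b\,\alpha(b)$ and $b\,\beta(b)$ can later receive different colours.

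The coloring would then use colour~$1$ on the star from $v$ to $A$; colour~$2$ on the matching $\{b\,\alpha(b):b\in B\}$; and colours~$3,4,5$ on a carefully chosen set of ``bypass'' edges inside $A\cup B$ that handle the pairs for which the naive route through $v$ fails. To verify rainbow connection I would split into four cases according to where $x,y$ sit: one of them equals $v$, both are in $A$, one in each part, or both are in $B$. The first three cases close in at most two edges and consume at most the first three colours, exploiting that $A\subseteq N(v)$ so any two vertices of $A$ automatically share the common neighbour $v$.

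The main obstacle is the case $x,y\in B$, because the naive path $x\,\alpha(x)\,v\,\alpha(y)\,y$ repeats both colour~$1$ and colour~$2$. Here I would use $\mathrm{diam}(G)=2$ to produce either a common neighbour $w\in A\cup B$ of $x$ and $y$, or an edge, or a $P_3$ between $\alpha(x)$ and $\alpha(y)$ inside $A$; bridgelessness then supplies enough ``spare'' edges on which colours~$3,4,5$ can be placed so that a rainbow $x$--$y$ path of length at most $4$ exists in every subcase. Finally, for the sharpness claim I would exhibit a small, bridgeless, diameter-$2$ graph (the Petersen graph is the canonical candidate) and argue by a short case-check that no $4$-edge-colouring of it can make every pair of non-adjacent vertices rainbow-connected, so the upper bound $5$ cannot be improved.
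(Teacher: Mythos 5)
The paper does not prove this lemma at all --- it is quoted from \cite{DL, LLL} --- so your proposal can only be judged on its own merits, and as it stands it has several genuine gaps. First, the colouring you describe already fails in one of the cases you call easy: if every edge of the star from $v$ to $A$ receives colour $1$, then for two non-adjacent vertices $x,y\in A$ the path $x\,v\,y$ repeats colour $1$ and is not rainbow, so the claim that the ``first three cases close in at most two edges'' is false for the colouring as specified. Second, and more seriously, the case $x,y\in B$ is precisely where all the work in \cite{DL, LLL} lies, and your sketch does not actually resolve it: you must produce a \emph{single global} assignment of colours $3,4,5$ to bypass edges that works simultaneously for every pair in $B$, whereas your argument only asserts that for each individual pair some suitable edges exist. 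The same edge may be demanded with different colours by different pairs, and reconciling these demands is the heart of the proof (the published arguments run through a delicate partition of $B$ according to how its vertices attach to $N(v)$, together with bridgelessness to exclude pendant-like attachments); nothing in the proposal addresses this consistency problem.

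Finally, the sharpness claim is wrong as proposed: the Petersen graph is bridgeless with diameter $2$, but its rainbow connection number is $3$ (this is a standard computation going back to Chartrand et al.), so it cannot witness that the bound $5$ is attained. The extremal examples in \cite{DL, LLL} are purpose-built graphs in which a vertex $v$ has many ``private'' neighbour configurations forcing five colours; some such construction, with a lower-bound argument tailored to it, is unavoidable here.
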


We show that the rainbow connection number is at most three for
bridgeless outerplanar graphs with diameter two, and at most six for
bridgeless outerplanar graphs with diameter three.

A subset $D$ of the vertices in $G$ is called a {\it dominating set}
if every vertex of $G-D$ is adjacent to a vertex of $D$.
Furthermore, if the dominating set $D$ induces a connect subgraph of
$G$, then $D$ is called a {\it connected dominating set}. Let
$X,Y\in V(G)$, we say that $X$ {\it dominates} $Y$ if every vertex
of $Y$ is adjacent to at least one vertex of $X$. The following
lemma will be used in the sequent.
\begin{lem}[\cite{CDRV}]
For any connected graph $G$ with minimum degree at least two. Let
$D$ be a connected dominating set of $G$, then $rc(G)\leq
rc(G[D])+3$.
\end{lem}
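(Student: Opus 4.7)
The plan is to take an optimal rainbow coloring of $G[D]$ and extend it to $G$ by using only three additional colors on the edges with at least one endpoint in $V(G)\setminus D$. Concretely, I would first fix an edge-coloring $c_0$ of $G[D]$ with $rc(G[D])$ colors making $G[D]$ rainbow connected, and then introduce three fresh colors $\alpha,\beta,\gamma$ to color the remaining edges. Since every two vertices of $D$ are already connected by a rainbow path inside $G[D]$, the whole problem reduces to arranging so that every vertex $v\in V(G)\setminus D$ can reach $D$ by a short $\{\alpha,\beta,\gamma\}$-colored walk, and that these short walks can be combined across any two outside-$D$ vertices without color repetition.

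The natural assignment uses the connected dominating property together with $\delta(G)\ge 2$. For each $v\in V(G)\setminus D$, pick a neighbor $f(v)\in D$ (which exists by domination) and color the \emph{anchor edge} $vf(v)$ with $\alpha$. Because $\deg(v)\ge 2$, vertex $v$ has a second neighbor $w$, and I would arrange a second ``exit'' from $v$ to $D$: if $w\in D$, color $vw$ with $\beta$; otherwise $w\in V(G)\setminus D$ has itself an anchor edge $wf(w)$, and I would color $vw$ with $\gamma$ so that the two-edge path $v,w,f(w)$ uses colors $\{\gamma,\alpha\}$ or, by redesignating, colors $\{\gamma,\beta\}$. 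Thus every $v\notin D$ owns a length-one exit to $D$ of color $\alpha$ and an alternative exit of color from $\{\beta,\gamma\}$ or a length-two exit through colors $\{\gamma,\beta\}$. For any pair $u,v\in V(G)\setminus D$, the candidate rainbow path has the form ``exit of $u$ to some $u'\in D$'' $+$ ``rainbow path in $G[D]$ from $u'$ to $v'$'' $+$ ``reverse exit from $v'\in D$ to $v$''; the fresh colors $\{\alpha,\beta,\gamma\}$ are disjoint from those of $c_0$, so one only has to make sure the two exits use different subsets of $\{\alpha,\beta,\gamma\}$. A case analysis on whether the pair is $(D,D)$, $(D,V\setminus D)$, or $(V\setminus D,V\setminus D)$, and, within the last case, on whether $f(u)=f(v)$ or not, then delivers a rainbow $u$-$v$ path.

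The main obstacle I anticipate is the case $f(u)=f(v)$: then the two $\alpha$-anchor edges share a $D$-endpoint, the ``rainbow path in $G[D]$'' degenerates to a single vertex, and if we naively concatenate $u\,f(u)=f(v)\,v$ we repeat color $\alpha$. Here we must route at least one of $u,v$ through its alternative $\{\beta,\gamma\}$-exit instead, which is why each outside-$D$ vertex is required to possess two distinct routes to $D$. This forces the coloring rule for edges lying inside $V(G)\setminus D$ to be chosen coherently: an edge $vw$ with $v,w\notin D$ cannot freely be given any of $\alpha,\beta,\gamma$, since its color interacts with both $f(v)$'s anchor and $f(w)$'s anchor. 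Verifying that the three colors $\alpha,\beta,\gamma$ always suffice to resolve these interactions, under only the hypothesis $\delta(G)\ge 2$, is the delicate part of the proof, and I would handle it by a careful prioritized assignment: first color all anchor edges $\alpha$, then color the remaining $D$-$(V\setminus D)$ edges with $\beta$, and finally color the inside edges of $V(G)\setminus D$ with $\gamma$, after which the case analysis above goes through routinely.
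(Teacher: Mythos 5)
This lemma is quoted in the paper from \cite{CDRV} without proof, so there is no in-house argument to compare against; your high-level strategy (keep an optimal rainbow coloring on $G[D]$ and spend three fresh colors on the edges meeting $V(G)\setminus D$) is indeed the strategy of the cited reference. However, your concrete final rule --- all anchor edges get $\alpha$, all remaining $D$--$(V\setminus D)$ edges get $\beta$, all edges inside $V(G)\setminus D$ get $\gamma$ --- does not work, and the place it breaks is exactly the case you flagged as ``delicate'' and then declared routine. Take $D=\{d_1,d_2\}$ with $d_1d_2\in E$, and outside vertices $u,v,w_u,w_v$ with edges $ud_1,\ vd_1,\ w_ud_2,\ w_vd_2$ (all anchors, hence all $\alpha$) and $uw_u,\ vw_v$ (internal, hence $\gamma$). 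This graph is connected with minimum degree two and $D$ is a connected dominating set with $rc(G[D])=1$, so the lemma promises $rc(G)\le 4$. Under your coloring, every $u$--$v$ path fails: $u d_1 v$ repeats $\alpha$; $u w_u d_2 d_1 v$ and $u d_1 d_2 w_v v$ each use $\alpha$ twice; $u w_u d_2 w_v v$ uses both $\alpha$ and $\gamma$ twice. So the ``prioritized assignment'' is not a proof, and the earlier, vaguer version of your scheme (redesignating $\{\gamma,\alpha\}$ as $\{\gamma,\beta\}$ for length-two exits) is never pinned down to a well-defined global coloring.

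The missing idea is that the three colors cannot be allocated by edge \emph{type}; they must be allocated so that each vertex of $V(G)\setminus D$ acquires two routes into $D$ whose color sets are \emph{different subsets} of $\{\alpha,\beta,\gamma\}$, and so that two outside vertices sharing their unique $D$-neighbor never both enter it on the same color. In particular, anchors at a common $D$-vertex must receive distinct colors, and an internal edge $vw$ interacts with the colors chosen at both $f(v)$ and $f(w)$; reconciling all of these constraints simultaneously with only three colors, using only $\delta(G)\ge 2$ (which guarantees every outside vertex either has two neighbors in $D$ or an escape of length two through another outside vertex), is the actual content of the lemma in \cite{CDRV}. As written, your argument asserts the conclusion of that combinatorial step rather than proving it.
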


\begin{thm}
If $G$ is a bridgeless outerplanar graph with order $n$ and diameter
two, then $rc(G)\leq 3$, i.e., $rc(G)=2,3$.
\end{thm}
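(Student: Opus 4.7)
The plan is to split the argument on whether $G$ has a cut vertex or is $2$-connected. Suppose first that $G$ has a cut vertex $v$. The diameter-two hypothesis forces $v$ to be universal: for any $x,y$ lying in different components of $G-v$, every $x$-$y$ path passes through $v$, so $d_G(x,y)\geq d_G(x,v)+d_G(v,y)\geq 2$, and equality with each summand equal to $1$ is forced by $\mathrm{diam}(G)=2$. Hence every vertex of $G$ is adjacent to $v$. With $D=\{v\}$ as a (trivially connected) dominating set, $G[D]$ is a single vertex and $rc(G[D])=0$, so the preceding lemma yields $rc(G)\leq 0+3=3$.

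Suppose next that $G$ is $2$-connected, so $G$ is Hamiltonian by the remark following the definition of outerplanar graph. If $n\leq 6$, the Hamiltonian proposition above gives $rc(G)\leq \lceil n/2\rceil\leq 3$ and we are done. If $n\geq 7$, I aim to show $G$ must again have a universal vertex, at which point the lemma finishes the proof as in the cut-vertex case. Fix a Hamilton cycle $C=v_1\ldots v_n$; the non-cycle edges are mutually non-crossing chords, and outerplanarity bounds their number by $n-3$. Let $v$ be a vertex of maximum degree and assume toward a contradiction that $v$ is not universal. The edges incident to $v$ partition $C-v$ into arcs, and some arc contains a non-neighbor $w$ of $v$. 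By outerplanarity every neighbor of $w$ lies inside the closed region bounded by the two $v$-incident edges delimiting that arc, so the common neighbor of $v$ and $w$ guaranteed by diameter two must be one of the two boundary vertices of the arc. Iterating this local constraint inside each such region and pitting the resulting demands against the chord budget $n-3$ should force $n\leq 6$, contradicting $n\geq 7$; hence some vertex of $G$ is universal and the lemma gives $rc(G)\leq 3$.

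The main obstacle is this last structural step for $n\geq 7$: showing rigorously that no non-crossing chord configuration on a Hamilton cycle of length at least seven can simultaneously realize diameter two and avoid a universal vertex. Each short chord $v_iv_{i+2}$ covers only two cycle-distance-three pairs and each long chord at most three, while the non-crossing condition prevents many useful combinations (for instance, any two "diagonals" of a hexagonal sub-face) from coexisting. Verifying that meeting all distance-three constraints within the budget $n-3$ inevitably forces the chords to concentrate at a single vertex requires a careful combinatorial inspection of the possible non-crossing chord diagrams, and this is where the proof is most delicate.
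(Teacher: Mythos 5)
Your cut-vertex case is correct and essentially the paper's: a cut vertex $v$ in a diameter-two graph must be adjacent to every other vertex (any pair separated by $v$ forces both to be neighbors of $v$), so $\{v\}$ is a connected dominating set and the lemma gives $rc(G)\leq 0+3=3$. The bridgeless hypothesis guarantees minimum degree two, so the lemma applies. No complaint there, and the small-$n$ Hamiltonian bound $rc(G)\leq\lceil n/2\rceil\leq 3$ for $n\leq 6$ is also fine.

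The problem is the $2$-connected case with $n\geq 7$, which is the heart of the theorem, and there you have not given a proof. You reduce everything to the claim that a $2$-connected outerplanar graph of diameter two on at least seven vertices has a universal vertex, but your justification is an explicit deferral: ``iterating this local constraint \ldots should force $n\leq 6$'' and ``requires a careful combinatorial inspection of the possible non-crossing chord diagrams.'' That inspection is precisely the content of the theorem; the chord-budget heuristic ($n-3$ chords versus the distance-three pairs to be covered) is not an argument, and it is not obvious how to turn it into one, since a single long chord can shorten many pairs at once. The claim itself is true, and the paper proves it by a much more direct route you may want to adopt: by Proposition~\ref{proposition1} there is a vertex $v$ of degree two with $N(v)=\{x_1,y_1\}$; the induced cycle $C$ through $v$ has length $3$ or $4$ (a longer induced cycle would put $v$ at distance $\geq 3$ from some vertex of $C$, wait --- more precisely the diameter condition restricts $|C|$); if $|C|=4$ one checks $n\leq 6$ directly, and if $|C|=3$ the diameter condition forces every vertex outside $\{v,x_1,x_2,y_1,y_2\}$ to be adjacent to $x_1$, so $G$ is a fan with apex $x_1$ --- which in particular yields your universal vertex, but only after this structural analysis is actually carried out. (The paper then finishes with explicit $2$- or $3$-colorings of the fan rather than invoking the dominating-set lemma; either ending works once the structure is established.) As written, your proposal proves the theorem only for graphs with a cut vertex or with $n\leq 6$.
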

\begin{proof}
Suppose that $G$ is a bridgeless outerplanar graph with diameter
two. If $G$ has a cut vertex, then this vertex is a domination set
of the graph, then $rc(G)\leq 3$. Now we suppose that $G$ is
$2$-connected and we can embed $G$ so that a Hamilton cycle, $H$,
bounds the outer face, and the edges not in $H$ are chords that lie
in the interior of $H$. If $G$ has no chords, then $G$ is a cycle of
length at most five and thus $rc(G)\leq 3$. In the following we
assume $G$ has chords. Let $v$ be a vertex with degree two and
suppose $N(v)=\{x_1,y_1\}$. Denote by $C$ the induced cycle of $G$
containing vertex $v$. We will consider the following two cases
according to the order of $C$.

{\bf Case 1.} $|C|=4$.

Suppose $C=vx_1zy_1v$. In this case, there are at most two vertices
outside of $C$, since each vertex outside of $C$ must be adjacent to
both $x_1$ and $z$ (or $y_1$ and $z$). Observe that $rc(G)=2$.

{\bf Case 2.} $|C|=3$.

For convenience, we assume $H=vx_1x_2\ldots x_{n/2}y_{(n-2)/2}
\ldots y_2y_1v$ for even $n$ and $H=vx_1x_2\ldots
x_{(n-1)/2}y_{(n-1)/2} \ldots y_2y_1v$ for odd $n$. If $H$ has only
one chord, then this case is the same as {\bf Case 1}. Otherwise,
$H$ has at least two chords and then $n\geq 5$. There must be one
chord $e$ such that one of its end vertex is $x_1$ or $y_1$, without
loss of generality, say $x_1$. Then, the other end of $e$ must be
$y_2$ or $x_3$. Assume $e=x_1y_2$, then all other vertices in the
set $V\setminus \{v,x_1,x_2,y_1,y_2\}$ must be adjacent to $x_1$, as
the diameter of $G$ is two. Therefore, in this case, the structure
of graph $G$ is a fan, as shown in Figure \ref{fig2}.

\begin{figure}[ht]
\begin{center}
\includegraphics[width=8cm]{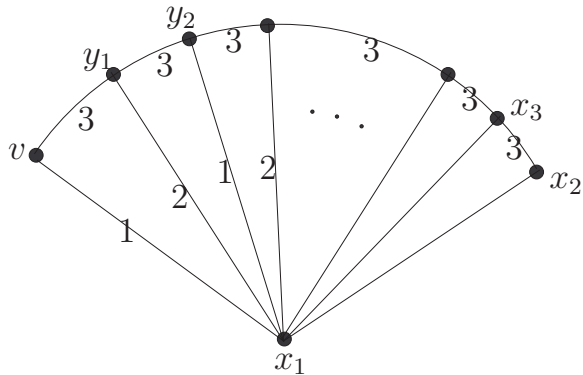}
\end{center}
\caption{Bridgeless outerplanar graph with diameter
two.}\label{fig2}
\end{figure}

For $n=5$, we can give an edge coloring $c$ of $G$ such that
$rc(G)=2$ under this coloring:
$c(vx_1)=c(vy_1)=c(x_1y_2)=c(x_2y_2)=1$ and
$c(x_1y_1)=c(y_1y_2)=c(x_1x_2)=2$. For $n\geq 6$, we observe that
$rc(G)=3$. Notice that $2$ colors cannot make $G$ rainbow connected.
Now we give one edge coloring with three colors: all edges with
$x_1$ as one of its end are assigned colors $1$ and $2$
alternatively in clockwise order; all other edges are assigned color
$3$, as shown in Figure \ref{fig2}.
\end{proof}

\begin{thm}
If $G$ is a bridgeless outerplanar graph with order $n$ and diameter
three, then $3\leq rc(G)\leq 6$.
\end{thm}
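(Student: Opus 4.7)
The plan is as follows. The lower bound $rc(G) \geq 3$ is immediate since $rc(G) \geq \mathrm{diam}(G) = 3$. For the upper bound, the idea is to apply the lemma stated just before the theorem (from \cite{CDRV}): because $G$ is bridgeless we have $\delta(G) \geq 2$, so for any connected dominating set $D$ of $G$ one obtains $rc(G) \leq rc(G[D]) + 3$. It therefore suffices to exhibit a connected dominating set $D$ such that $rc(G[D]) \leq 3$, and in fact I will try to arrange $G[D]$ to be a bridgeless outerplanar graph of diameter at most two, so that the preceding theorem delivers $rc(G[D]) \leq 3$ for free.

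I would construct $D$ by splitting on the connectivity of $G$. If $G$ has a cut vertex $v$ with components $V_1,\ldots,V_k$ of $G-v$, a short argument using $\mathrm{diam}(G)=3$ shows that at most one component (say $V_1$) can contain a vertex at distance $\geq 2$ from $v$, and no vertex can lie at distance $3$ from $v$, since otherwise two vertices in distinct components would be at distance $\geq 4$ in $G$. Taking $D=\{v\}\cup(N(v)\cap V_1)$ and then trimming to a minimal connected dominating set yields a graph $G[D]$ in which $v$ is adjacent to every other chosen vertex, hence of diameter at most two and still outerplanar, so the previous theorem applies (a direct $3$-coloring is also available). If instead $G$ is $2$-connected, then $G$ has a Hamilton cycle $H$ and, by Proposition~\ref{proposition1}, a degree-two vertex $v$. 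Following the template of the diameter-two proof, I would examine the minimum induced cycle $C$ through $v$ and split into cases $|C|=3,4,5$: the diameter-three condition restricts both how many vertices can lie outside $C$ and how chords of $H$ can be embedded, and in each configuration I would locate a short subpath or short subcycle of $H$ (of length at most three) that dominates $G$ and whose induced subgraph is bridgeless outerplanar of diameter at most two.

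The hardest part will be the $2$-connected case, where outerplanarity and the diameter-three hypothesis must be exploited simultaneously. The analysis should parallel Cases 1 and 2 of the diameter-two theorem but with one extra layer of branching, ruling out configurations in which no short dominating substructure exists by deriving a pair of vertices at distance $\geq 4$. Once the connected dominating set $D$ with $rc(G[D])\leq 3$ is constructed, the desired bound $rc(G)\leq 6$ is immediate from the lemma, and together with the trivial $rc(G)\geq 3$ this completes the proof.
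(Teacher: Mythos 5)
Your overall frame matches the paper's: the lower bound is $rc(G)\geq \mathrm{diam}(G)=3$, and the upper bound comes from the lemma of \cite{CDRV} applied to a suitable connected dominating set $D$. But there is a conceptual misstep in how you plan to bound $rc(G[D])$. You aim to make $G[D]$ a \emph{bridgeless} outerplanar graph of diameter at most two so that the previous theorem gives $rc(G[D])\leq 3$. The small connected dominating sets that actually arise here are paths, stars, or trees on three or four vertices (e.g.\ $S\cup\{v\}$ where $v$ is a cut vertex and $S\subseteq N(v)$, or a chord endpoint together with two or three neighbours on the Hamilton cycle); every edge of such an induced subgraph is typically a bridge, so the diameter-two theorem is not applicable. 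What is actually needed, and what the paper uses, is merely that $|D|\leq 4$, whence $rc(G[D])\leq |D|-1\leq 3$ by the trivial bound for connected graphs. Your extra requirement is both unnecessary and generally unattainable.

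The more serious issue is that the combinatorial heart of the theorem --- proving that a connected dominating set of size at most four always exists --- is only gestured at. In your cut-vertex case, $D=\{v\}\cup(N(v)\cap V_1)$ can be large, and ``trimming to a minimal connected dominating set'' does not by itself yield a small one: the paper must invoke the outerplanar embedding to show that a minimum subset $S$ of $N(v)$ dominating the distance-two vertices has $|S|\leq 2$ (three private neighbours $x_1,x_2,x_3$ of $s_1,s_2,s_3$ arranged around the outer face would force $d(x_1,x_3)\geq 4$). In the $2$-connected case you defer entirely to ``an analysis paralleling the diameter-two proof,'' but this is where almost all of the work lies: the paper chooses a two-vertex cut $\{a,b\}$ dominating a maximum number of vertices, then an extremal second cut $\{c,d\}$ inside the $ab$-segment, and works through several subcases, in each either exhibiting a connected dominating set of size at most four or deriving a pair of vertices at distance at least four. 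It also handles separately the configurations with zero or one chord, where the dominating-set approach is bypassed altogether. Without these arguments the upper bound $rc(G)\leq 6$ is not established.
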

\begin{proof}
Suppose that $G=(V,E)$ is a bridgeless outerplanar graph with
diameter three. Since the rainbow connection number is at least the
diameter, then we have $rc(G)\geq 3$. Suppose $G$ is not
$2$-connected and let $v$ be a cut vertex of $G$. There is a
partition of $V-\{v\}$ into two sets $A$ and $B$ such that vertex
$v$ dominates either $A$ or $B$. Without loss of generality, we may
assume $v$ dominates $A$. Denote by $B_1$ the vertices of $B$ that
are adjacent to $v$ and $B_2=B-B_1$. Choose a minimum cardinality
subset $S$ of $B_1$ such that $S$ dominates $B_2$. Then $S\cup
\{v\}$ is a connected dominating set. We claim that $|S|\leq 2$.
Suppose that $|S|\geq 3$ and let $S=\{s_1,s_2,s_3\}$. By the
minimality of $S$, there exist three vertices $x_1,x_2,x_3\in B_2$
satisfying that among three vertices $s_1,s_2,s_3$, $x_i$ is only
adjacent to vertex $s_i$ for $1\leq i\leq 3$. Take $a\in A$. Without
loss of generality, we may assume that an embedding of $G$ as an
outerplanar graph has vertices $a,s_1,s_2,s_3$ in clockwise order,
adjacent to $v$. Since all the vertices of $G$ lie on a common face,
there is no way to obtain a path of length at most three between
$x_1$ and $x_3$. Thus, $|S|\leq 2$, which yields that $rc(G)\leq
rc(G[S\cup \{v\}])+3=5$.

Now suppose $G$ is $2$-connected. It follows that $G$ can be
embedded in such a way that a Hamilton cycle $H$ bounds the outer
face, and the edges not in $H$ are chords that lie in the interior
of $H$. If $H$ has no chords, then $G$ is a cycle of length at most
seven, and thus $rc(G)=3$ or $rc(G)=4$. Thus, in the following we
assume $H$ has at least one chord.

Suppose $xy$ is a chord of $H$. Cycle $H$ is divided into two
$xy$-paths. We denote the path goes in clockwise direction from $x$
to $y$ by the $xy$-segment of $H$, and denote the other path by the
$yx$-segment of $H$.

Now suppose $H$ has precisely one chord $xy$. In this case,
$\{x,y\}$ is a vertex cut of $G$. Since $G$ has diameter three, then
$\{x,y\}$ dominates either $xy$-segment of $H$ or the $yx$-segment
of $H$. Without loss of generality, we suppose that $xy$-segment is
dominated. Since there are no other chords, the $xy$-segment of $H$
is a path of length two or three. If it is two, then the
$yx$-segment of $H$ is a path of length four or five and thus we can
check that $rc(G)=3$. Otherwise, the $yx$-segment of $H$ is a path
of length three or four and thus $rc(G)=3$ or $4$.

Suppose $H$ has at least two chords. Among all vertex cuts with two
vertices, we choose $\{a,b\}$ as a vertex cut such that it dominates
a maximum number of vertices. Note that $a$ and $b$ may not
correspond to the ends of a chord of $H$. Since $G$ has diameter
three, $\{a,b\}$ dominates one segment of $H$. Without loss of
generality, we assume $ba$-segment of $H$ is dominated by $\{a,b\}$.
Consider the $ab$-segment of $H$.

{\bf Case 1.} There are no chords with both ends on the $ab$-segment
of $H$.

In this case, there are at least two chords in the $ba$-segment. It
follows that there are at most three internal vertices in the
$ab$-segment of $H$. Now we suppose there are three internal
vertices in the $ab$-segment of $H$, since it is easy to check that
$rc(G)\leq 6$ for the other two cases. If $ab\in E(G)$, then there
exists a connected dominating set with three vertices and then
$rc(G)\leq 5$. Otherwise, we claim that there exists a vertex $v$ in
the $ba$-segment such that $va,\ vb\in E(G)$, since $G$ has diameter
$3$ and at least two chords. It implies that $G$ has a connected
dominating set with four vertices, then $rc(G)\leq 6$.

{\bf Case 2.} There are some chords with both ends on the
$ab$-segment of $H$.

Choose a vertex cut of size two, $\{c,d\}$, such that any other
vertex cut of size two with both vertices in the $ab$-segment of $H$
has at least one vertex in the $cd$-segment of $H$, where the
$cd$-segment is a part of the $ab$-segment.

{\bf Subcase 2.1.} $a,b,c,d$ are not all distinct vertices.

Without loss of generality, we suppose $b=d$. By our choice, any
vertex on the $ac$-segment of $H$ does not form of a vertex cut with
$b$, and hence $ac$ must be an edge of $G$ ($ac$ may be an edge of
$H$ or a chord of $H$).

Suppose $\{c,b\}$ can not dominate the $cb$-segment. Let $v$ be a
vertex on the $vc$-segment such that $d(v,c)\geq 2$ and $d(v,b)\geq
2$. Then all vertices in $ba$-segment must be adjacent to vertex
$b$. Therefore, all vertices in $ac$-segment must be adjacent to
vertex $c$, since otherwise, if there exists a vertex $w$ such that
$wa\in E(G)$ and $wc\notin E(G)$, then $d(w,v)\geq 4$. Thus,
$\{b,c\}$ is a vertex cut with two vertices, which dominates more
vertices than $\{a,b\}$, a contradiction to the choice of $\{a,b\}$.

Now suppose $\{c,b\}$ dominates the $cb$-segment. Thus, $\{a,b,c\}$
must be a dominating set of $G$. If one of $ab$ and $bc$ is an edge
of $G$, then $\{a,b,c\}$ is a connected dominating set of $G$ and
thus $rc(G)\leq 2+3=5$. Now suppose neither $ab$ nor $bc$ is an edge
of $G$.

{\bf Subsubcase 2.1.1.} There is vertex $v$ in $ba$-segment (or
$cb$-segment) such that $v$ is adjacent to both $a$ and $b$ (or $c$
and $b$).

In this situation, $\{a,b,c,v\}$ is a connected dominating set of
$G$ and thus $rc(G)\leq 3+3=6$.

{\bf Subsubcase 2.1.2.} Otherwise, there does not exist such vertex.

Each vertex in $ba$-segment is only adjacent to one of $a$ and $b$,
and each vertex in $cb$-segment is only adjacent to one of $c$ and
$b$. Now in this case, each of $ba$-segment and $cb$-segment of $H$
has at least two internal vertices. We claim each of $ba$-segment
and $cb$-segment of $H$ has exactly two internal vertices, since
otherwise, we always can find two vertices with distance at least
four. Since $G$ has at least two chords, then we can assume that
$ac$-segment has at least two internal vertices, which also implies
one pair of vertices with distance at least four.

{\bf Subcase 2.2.} $a,b,c,d$ are distinct vertices.

The choice of $\{c,d\}$ implies that neither $ad$ nor $bc$ is an
edge of $G$. From the way that $\{a,b\}$ and $\{c,d\}$ was chosen,
we know that $\{a,b\}$ dominates the $ba$-segment and $\{c,d\}$
dominates the $dc$-segment. Moreover, $ac$ and $bd$ must be edges of
$G$. If there is one vertex $p$ in $ba$-segment such that it is
adjacent to $a$ but not adjacent to $b$, and also one vertex $q$ in
$cd$-segment such that it is adjacent to $d$ but not adjacent to
$c$, then $d(p,q)\geq 4$, a contradiction. Therefore, either
$\{a,b,c\}$ or $\{b,c,d\}$ is a dominating set of $G$. We assume
that $\{a,b,c\}$ is a dominating set of $G$, since the other case is
similar. If $ab$ is an edge of $G$, then $\{a,b,c\}$ is a connected
dominating set of $G$ and thus $rc(G)\leq 2+3=5$. Now suppose $ab$
is not an edge of $G$.

{\bf Subsubcase 2.2.1.} There is vertex $v$ in $ba$-segment such
that $v$ is adjacent to both $a$ and $b$.

In this situation, $\{a,b,c,v\}$ is a connected dominating set of
$G$ and thus $rc(G)\leq 3+3=6$.

{\bf Subsubcase 2.2.2.} Otherwise, there does not exist such vertex.

In this situation, each vertex in the $cd$-segment must be adjacent
to both $c$ and $d$, which implies that $cd$-segment contains
exactly one internal vertex. Similarly, there exactly two internal
vertices in the $ba$-segment. Since $G$ has two chords, then there
are some internal vertices in the $ac$-segment and $bd$-segment. In
each case, we can find two vertices with distance at least four.

The proof is thus completed.

\end{proof}

\section{Rainbow vertex-connection for line graphs}
In \cite{CLS}, the complexity of determining the rainbow
vertex-connection of a graph has been studied. The following result
was proved.
\begin{lem}[\cite{CLS}]\label{lem4}
The following problem is NP-Complete: given a vertex-colored graph
$G$, check whether the given coloring makes $G$ rainbow
vertex-connected.
\end{lem}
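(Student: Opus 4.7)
The plan is to give a polynomial-time reduction from the problem of Lemma~\ref{lem1} (deciding whether a given edge-colored graph is rainbow connected, which is NP-complete). Given an edge-colored graph $G=(V,E)$ with coloring $c$, I construct $H$ by attaching a new pendant $v'$ to every $v\in V$ via the pendant edge $e_v=vv'$. The target instance is the line graph $L(H)$, endowed with the vertex coloring $c'$ defined by $c'(f)=c(f)$ for every $f\in E$ and $c'(e_v)$ a brand-new color for each $v\in V$, with all pendant colors mutually distinct and distinct from every color that $c$ uses on $E$. The construction is polynomial in the size of $G$, and $L(H)$ is a line graph by construction.

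I claim that $G$ is rainbow connected under $c$ if and only if $L(H)$ is rainbow vertex-connected under $c'$. For the forward direction, the principal case is a pair $(e_u,e_v)$ with $u\ne v$: given a rainbow $u$-$v$ path in $G$ with edges $f_1,\dots,f_k$ of pairwise distinct colors, the sequence $e_u,f_1,\dots,f_k,e_v$ is a path in $L(H)$ whose internal vertices have pairwise distinct colors. The remaining pairs $(e_u,f)$ and $(f,f')$ with $f,f'\in E$ are handled by analogous concatenation, possibly truncated if $f$ or $f'$ already lies on the chosen rainbow $G$-path.

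For the converse, assume $L(H)$ is rainbow vertex-connected and fix any $u,v\in V$. Choose a rainbow vertex-path $P=(e_u=g_0,g_1,\dots,g_m=e_v)$ in $L(H)$. The key observation is that $P$ can be shortened to a rainbow vertex-path whose internal vertices all belong to $E$: if some internal $g_i$ equals a pendant edge $e_w=ww'$, then, because $w'$ is a leaf in $H$, both $g_{i-1}$ and $g_{i+1}$ must share the vertex $w$ with $g_i$, so $g_{i-1}\sim g_{i+1}$ in $L(H)$ and deleting $g_i$ yields a strictly shorter rainbow vertex-path between $e_u$ and $e_v$. After iterating, $g_1,\dots,g_{m-1}$ are all edges of $G$ and trace out a walk from $u$ to $v$ whose edge colors are exactly the internal colors of $P$, hence pairwise distinct; shortcutting produces a rainbow $u$-$v$ path in $G$. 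The main obstacle is precisely this converse, where one must prevent $L(H)$-paths from exploiting the fresh pendant colors to fabricate connectivity absent in $G$; the ``leaf forces shared endpoint'' observation above does exactly this. Since the problem is clearly in NP, deciding rainbow vertex-connectivity is NP-complete even when restricted to vertex-colored line graphs.
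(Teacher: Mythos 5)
Your reduction is correct, and in fact it proves more than Lemma~\ref{lem4} asks for: you establish NP-completeness already for vertex-colored \emph{line} graphs, which is precisely the content of the theorem this paper proves in Section~3 (Lemma~\ref{lem4} itself is only quoted from \cite{CLS}, with no proof given here). Your construction --- attach a pendant edge $e_v$ at each vertex of $G$, pass to the line graph, keep the old colors on the old edges and put new colors on the pendant edges --- coincides with the paper's, up to one cosmetic difference: the paper gives all pendant edges a single shared new color $c_0$, whereas you give them pairwise distinct new colors. Both choices work, because the pendant vertices of $L(H)$ only ever need to appear as endpoints, never as internal vertices, of the paths one must exhibit. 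Your version makes the shortening step in the converse genuinely necessary (with distinct fresh colors a rainbow vertex-path could a priori pass through many pendant vertices), and you supply it correctly: the leaf $w'$ forces $g_{i-1}$ and $g_{i+1}$ to meet at $w$, so an internal $e_w$ can be deleted. This is actually more careful than the paper's own converse, which silently assumes the internal vertices of the rainbow vertex-path all lie in $E$. One small imprecision on your side: a sequence of edges of $G$ in which consecutive members share an endpoint need not literally trace a walk from $u$ to $v$ (three edges can all meet at one common vertex); what is true is that the union of $g_1,\dots,g_{m-1}$ is a connected subgraph of $G$ containing both $u$ and $v$, and since these edges have pairwise distinct colors, any $u$--$v$ path inside that subgraph is rainbow. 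Your word ``shortcutting'' covers this, but it deserves to be said explicitly.
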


We will prove that it is still NP-Complete to decide whether a given
vertex-colored graph is rainbow connected even when the
vertex-colored graph is a line graph.

\begin{thm}
The following problem is NP-Complete: given a vertex-colored line
graph $G$, check whether the given coloring makes $G$ rainbow
vertex-connected.
\end{thm}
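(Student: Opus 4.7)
The plan is to prove NP-hardness by a polynomial reduction from the general rainbow vertex-connection problem of Lemma \ref{lem4}. Given an instance $(G,c)$ of that problem, I would construct in polynomial time a graph $K$ together with a vertex-coloring $c'$ of $H=L(K)$, in such a way that $H$ is automatically a line graph and $(G,c)$ is rainbow vertex-connected if and only if $(H,c')$ is rainbow vertex-connected. Membership in NP is standard: a certificate consists of a rainbow vertex path between every pair of vertices, which is polynomial in total size and verifiable in polynomial time.

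For the construction, I would let $K$ be obtained from $G$ by subdividing each edge $e=uv$ once (introducing a new vertex $m_e$), and then attaching a small \emph{color gadget} at each vertex $v\in V(G)$. Each $v\in V(G)$ then gives rise in $H=L(K)$ to a clique $C_v$ (made of the $K$-edges incident to $v$), and for each $G$-edge $e=uv$ the two $K$-edges $\{u,m_e\}$ and $\{m_e,v\}$ become $H$-vertices that form a bridge between $C_u$ and $C_v$. The coloring $c'$ would assign color $c(v)$ to a distinguished vertex $p_v$ of $C_v$ (the one coming from the gadget) and fresh pairwise-distinct colors to all the other vertices of $H$, with enough additional copies of $c(v)$ placed inside the gadget so that any rainbow path crossing through the $v$-region is forced to use $p_v$ as an internal vertex.

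For correctness, I would establish the equivalence in both directions. Given a rainbow vertex path $u_0,u_1,\dots,u_k$ in $G$, I would lift it to an $H$-path that hops through the cliques $C_{u_0},C_{u_1},\dots,C_{u_k}$ via the bridge vertices, routing each intermediate crossing through the distinguished vertex $p_{u_i}$ of color $c(u_i)$; since the non-distinguished clique and bridge vertices have fresh colors, the only potential collisions among internal colors are among $c(u_1),\dots,c(u_{k-1})$, which are distinct by hypothesis. Conversely, any rainbow path in $H$ between two distinguished gadget vertices $p_u$ and $p_w$ projects to a walk $u,u_1,\dots,u_{k-1},w$ in $G$, and the forced use of $p_{u_i}$ at each intermediate crossing transfers rainbow-ness in $H$ into distinctness of the internal colors $c(u_i)$ in $G$. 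The case in which the two endpoints in $H$ are not distinguished gadget vertices is reduced to the previous case by prepending or appending short paths inside the relevant cliques, analogously to the case analysis used in the proof of Theorem \ref{thm1}.

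The main obstacle is the design of the gadget. The core difficulty is that a path in $L(K)$ traversing a clique $C_v$ normally uses at least two of its vertices, so coloring the whole of $C_v$ with $c(v)$ would destroy rainbow-ness at every crossing, whereas coloring it with all fresh colors would make $H$ trivially rainbow vertex-connected and lose all information about $c$. The gadget must therefore act as a bottleneck forcing any crossing path to use exactly one copy of $c(v)$, while not introducing spurious shortcut paths that would break the equivalence. Once the gadget is chosen appropriately, polynomiality of the reduction is immediate, since each vertex and each edge of $G$ contributes only a constant-size piece to $K$.
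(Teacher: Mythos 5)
Your reduction starts from the wrong source problem, and this is exactly what creates the gadget-design obstacle that you name but never resolve. You reduce from the rainbow \emph{vertex}-connection problem (Lemma \ref{lem4}), so you must encode the vertex colors of $G$ as vertex colors of a line graph; but in $H=L(K)$ a path that passes through a vertex $v$ of $K$ does so by traversing \emph{two} adjacent $H$-vertices (the incoming and the outgoing edge at $v$), not a single one, so there is no canonical place to deposit the single color $c(v)$. Worse, the forcing you hope for cannot work: the edges of $K$ incident to $v$ form a clique in $L(K)$, so a path can cross the $v$-region by stepping directly from one bridge vertex of $C_v$ to an adjacent one without entering your gadget at all, and hence without ever visiting the distinguished vertex $p_v$. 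With all non-distinguished vertices freshly colored, $H$ would then be rainbow vertex-connected essentially regardless of whether $(G,c)$ is, destroying the equivalence. Since your entire correctness argument is conditional on ``once the gadget is chosen appropriately,'' and no such gadget is exhibited (the clique obstruction suggests none of this shape exists), the proof is incomplete at its central step.

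The paper sidesteps all of this by reducing from the \emph{edge}-colored rainbow connectedness problem (Lemma \ref{lem1}), which is the natural source for a line-graph target: edges of a graph become vertices of its line graph, so an edge coloring is already a vertex coloring of the line graph and rainbow paths translate almost verbatim. The only additional work is to attach a pendant edge $u_iv_i$ at every vertex $v_i$, all pendant edges receiving one common new color $c_0$, so that the first and last edges of a $v_i$--$v_j$ path in $G$ become \emph{internal} vertices of the corresponding path in the line graph. If you want to salvage your write-up, switch the source problem to Lemma \ref{lem1}; the gadget then becomes unnecessary.
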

\begin{proof}
By Lemma \ref{lem1}, it will suffice by showing a polynomial
reduction from the problem in Lemma \ref{lem1}.

Given a graph $G=(V,E)$ and an edge-coloring $c$ of $G$. We want to
construct a line graph $G'$ with a vertex coloring such that $G'$ is
rainbow vertex-connected iff $G$ is rainbow connected.

Let $G=(V,E)$ and suppose $V=\{v_1,x_2,\ldots,v_n\}$ and
$E=\{e_1,e_2,\ldots,e_m\}$. Let $G_0=(V_0,E_0)$ be a new graph,
which is obtained from $G$ by attaching a pendent vertex $u_i$ to
$v_i$ for each $1\leq i\leq n$. Thus, $V_0=V\cup
\{u_1,u_2,\ldots,u_n\}$ and $E_0=E\cup \{e'_i=u_iv_i: 1\leq i\leq
n\}$. Let $G'$ be the line graph of $G_0$ and then $V(G')=E_0$. Now
we define a vertex coloring $c'$ as follows: for each $1\leq i\leq
n$, $c'(e_i)=c(e_i)$ and $c'(e'_i)=c_0$, where $c_0$ is a new color
we introduced.

Suppose $G$ is rainbow connected under the edge coloring $c$, then
we will check that there exists one vertex rainbow path between any
pair of vertices in $G'$ under the vertex coloring $c'$. Consider
the pair of $e'_i$ and $e'_j$ for $i\neq j$. Let
$v_{i_0}v_{i_1}\ldots v_{i_{k+1}}$ be the rainbow path between $v_i$
and $v_j$ in $G$, where $v_{i_0}=v_i$ and $v_{i_{k+1}}=v_j$. Denote
by $e_{i_t}=v_{i_t}v_{i_{t+1}}$ for $0\leq t\leq k$. Thus, we have
that edges $e_{i_0},\ e_{i_1},\ \ldots,\ e_{i_k}$ have distinct
colors. By the definition of $c'$, the colors of vertices $e_{i_0},\
e_{i_1},\ \ldots,\ e_{i_k}$ in $G'$ are all distinct. Thus,
$e'_ie_{i_0}e_{i_1}\ldots e_{i_k}e'_j$ is a required vertex rainbow
path. Similarly, for the pair $e'_i$ and $e_j$, and the pair $e_i$
and $e_j$, we can find vertex rainbow paths in $G'$, respectively.

Now suppose $G'$ is rainbow vertex-connected under the vertex
coloring $c'$, then we will check that there exists one rainbow path
between any pair of vertices in $G$ under the coloring $c$. For each
pair $e'_i$ and $e'_j$, where $1\leq i\neq j\leq n$, there exists
one vertex rainbow path $e'_ie_{i_0}e_{i_1}\ldots e_{i_{k+1}}e'_j$,
i.e., $e_{i_0},\ e_{i_1},\ \ldots,\ e_{i_{k+1}}$ has distinct
colors. Observe that in $G$, one of end vertices of $e_{i_0}$ is
$v_i$ and one of end vertices of $e_{i_{k+1}}$ is $v_j$. Thus, there
indeed exists one rainbow path connecting $v_i$ and $v_j$.

The proof is thus completed.

\end{proof}

\end{document}